\documentclass[runningheads]{llncs}

\title{Analysis of Communication Channels Related to Physical Unclonable Functions}
\author{Georg Maringer\inst{1}\orcidID{0000-0002-2868-5131} \and Marvin Xhemrishi\inst{1} \and Sven Puchinger\inst{1} \and Kathrin Garb\inst{2} \and Hedongliang Liu\inst{1}\orcidID{0000-0001-7512-0654} \and Thomas Jerkovits\inst{1,3} \and Ludwig Kürzinger \and Matthias Hiller\inst{2}\orcidID{0000-0003-1238-1114} \and \\Antonia Wachter-Zeh\inst{1}\orcidID{0000-0002-5174-1947}
\thanks{G. Maringer's work was supported by the German Research Foundation (Deutsche Forschungsgemeinschaft, DFG) under Grant No. WA3907/4-1. The work of H.~Liu has been supported by a German Israeli Project Cooperation (DIP) grant under grant no.~PE2398/1-1 and KR3517/9-1. 
The work of S. Puchinger and A. Wachter-Zeh has been supported by the European Research
Council (ERC) under the European Union’s Horizon 2020 research and innovation programme (Grant Agreement No. 801434)
The work of K.~Garb and M.~Hiller has been supported by the DFG Priority Program SPP 2253 Nano Security (Project STAMPS).}}
\authorrunning{G. Maringer et al.}
\titlerunning{Analysis of Communication Channels related to PUFs}
\institute{Technical University of Munich, Germany \\
	\email{\{georg.maringer, marvin.xhemrishi, sven.puchinger, lia.liu,\\antonia.wachter-zeh\}@tum.de, ludwig.kuerzinger@mytum.de} \and
	Fraunhofer Institute for Applied and Integrated Security AISEC, Garching (near Munich), Germany \\
	\email{\{matthias.hiller, kathrin.garb\}@aisec.fraunhofer.de} \and
	Deutsches Luft- und Raumfahrtzentrum, Germany \\
	\email{\{thomas.jerkovits@dlr.de\}}
	}

\usepackage{amsmath, amssymb, amsbsy}
\usepackage{cleveref}
\usepackage{color}
\usepackage{bm}
\usepackage{pgfplots}
\usepackage{pifont}
\usepackage{float}
\usepackage{cite}
\usepackage{color}
\usepackage{amsfonts}
\usepackage{bbm}
\usepackage{mathtools}
\usepackage{setspace}
\usepackage{multicol}
\usetikzlibrary{positioning}
\usepackage[acronym]{glossaries}
\usepackage{xcolor}

\newcommand{\defeq}{\vcentcolon=}
\DeclareMathOperator{\supp}{supp}

\newcommand{\Exp}{\mathbb{E}}

\usepackage{tabularx}
\newcolumntype{P}[1]{>{\centering\arraybackslash}p{#1}}

\begin{document}

\maketitle

\begin{abstract}
Cryptographic algorithms rely on the secrecy of their corresponding keys.
On embedded systems with standard CMOS chips, where secure permanent memory such as flash is not available as a key storage, the secret key can be derived from Physical Unclonable Functions (PUFs) that make use of minuscule manufacturing variations of, for instance, SRAM cells.
Since PUFs are affected by environmental changes, the reliable reproduction of the PUF key requires error correction. For silicon PUFs with binary output, errors occur in the form of bitflips within the PUFs response. Modelling the channel as a Binary Symmetric Channel (BSC) with fixed crossover probability $p$ is only a first-order approximation of the real behavior of the PUF response. We propose a more realistic channel model, refered to as the Varying Binary Symmetric Channel (VBSC), which takes into account that the reliability of different PUF response bits may not be equal. We investigate its channel capacity for various scenarios which differ in the channel state information (CSI) present at encoder and decoder. We compare the capacity results for the VBSC for the different CSI cases with reference to the distribution of the bitflip probability according a work by Maes et al.
\keywords{Physical Unclonable Functions \and Channel Model \and Channel Capacity \and Channels with Random State}
\end{abstract}

\newacronym{puf}{PUF}{Physical Unclonable Function}
\newacronym{vbsc}{VBSC}{Varying Binary Symmetric Channel}
\newacronym{bsc}{BSC}{Binary Symmetric Channel}

\section{Introduction}
\label{sec:intro}

Secure keys are the foundation for secure cryptographic operations. As a wide range of integrated circuits does not have access to secure key storage, \glspl{puf} evaluate physical properties of a circuit to derive a unique fingerprint and thus generate a device-specific cryptographic key. A \gls{puf} can be seen as the fingerprint of a physical object \cite{Mae12,HYKD14}, emanated from minuscule manufacturing variations that vary from object to object. This physical fingerprint can, hence, contribute to a hardware root of trust for security applications, such as authentication or identification of a device, and also the generation of a key for cryptographic primitives.

SRAM is widely available on embedded devices and a popular PUF primitive~\cite{GKST07}. Upon power-up, the state of SRAM cells is not deterministically defined, however they will mostly start up in the same state. Due to manufacturing variations, the state of a certain SRAM cell will either be ``0'' or ``1''. Multiple SRAM bits can therefore be grouped into a (randomly distributed) PUF-response. As characterized in \cite{MTV09a} and later refined in \cite{Mae13}, SRAM PUF cells vary significantly in their reliability. This reliability can be estimated with multiple measurements. In contrast, other PUFs such as the Ring-Oscillator PUF directly output analog or finely quantized discrete outputs, such that reliability information is directly available for each measurement. This also allows to estimate the reliability of a specific measurement as channel state information during decoding.

From these random SRAM bits, a key can be generated, which is exemplarily shown in Fig.~\ref{fig:keyGenOverview} for the code-offset fuzzy extractor \cite{Dodis2004}.
The PUF key is generated (once) during the manufacturing process of the device, which is referred to as \emph{enrollment}. After that, the PUF response is repeatedly read out whenever needed, such that the key can be reproduced.
However, since the \gls{puf} is subject to noise and also environmental effects such as temperature changes, certain SRAM bits can flip and impact the reliability of the key. Hence, an error correction algorithm was included into the key generation scheme.

Fig.~\ref{fig:keyGenOverview} shows that the key generation phase and the reproduction phase contain encoding and decoding steps. During the reproduction phase, the \gls{puf} will be read out again and the \gls{puf} response will deviate from the original one in the enrollment phase due to changes in environmental conditions. This corresponds to transmitting the PUF response over a noisy channel, which in case of SRAM cells can be described by a \gls{bsc}.

\begin{figure}[t!]
\begin{centering}
\resizebox{\textwidth}{!}{
\includegraphics{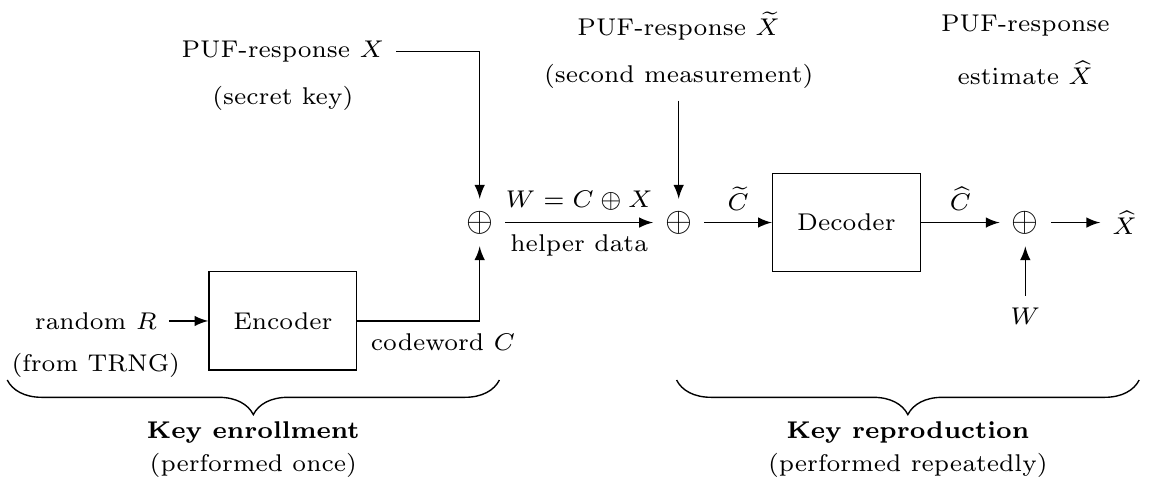}}
\caption{Simplified schematic of a key generation scheme based on \glspl{puf}.}
\label{fig:keyGenOverview}
\end{centering}
\end{figure}

The reliability of the power-up state of an SRAM cell heavily depends on the mismatch of between its transistors so that a \gls{bsc} model with constant error probability discards parts of the information. Experimental work compared in \cite{HKS20} has shown that fuzzy extractors benefit from reliability information, but still lack work on fundamental limits.

In this paper, we extend the \gls{bsc} model as used, e.g., in \cite{BIW+17}, by taking into account that the crossover probabilities may vary from cell to cell. This leads to the channel model discussed in Section~\ref{sec:channel_model} which we then analyze using information-theoretic methods in Section~\ref{sec:VBSC_analysis} distinguishing whether CSI is present at encoder and/or decoder. Furthermore within this section we show a capacity-achieving code construction using polar codes for CSI at encoder and decoder. Section~\ref{sec:example} shows the results of our analysis when the probability density function (pdf) of the crossover probabilities is specified according to \cite{MTV09a}. Section~\ref{sec:conclusion} concludes the paper.

\section{Notation}
We denote vectors by bold lowercase letters, e.g. $\bm{v}$, and its $i$-th component by $v_i$. We denote the subvector consisting of the elements from $i$ to $j$ by $v_{i}^j\defeq (v_i,v_{i+1},\ldots,v_j)$ where we omit the $i$ if it is equal to $1$. We denote random variables by uppercase letters, e.g. $X$ denotes a random variable and we denote its probability mass function by $P_X$ if the random variable is discrete and its probability density function by $f_X(x)$ in case the random variable $X$ is continuous. We denote the expected value of a random variable $X$ by $\Exp\big[X\big]$.
In this work logarithms are to the base $2$ unless otherwise stated. We denote the entropy of a discrete random variable by $H(X)$ and the mutual information between two random variables by $I(X;Y)$. We denote the binary entropy function by $H_2(p):= -p \log p -(1-p) \log(1-p)$.

\section{Channel Model}\label{sec:channel_model}
In Section~\ref{sec:intro}, it has been motivated why reading out a PUF response corresponds to data transmission over a noisy channel. More specifically the channel corresponds to a binary symmetric channel (BSC) that changes its crossover probability $p$ before each channel use according to a pdf $f_P(p)$.
In the following, it is assumed that $f_P(p)$ is known to the transmitter and to the receiver. The pdf has to be estimated by the manufacturer because it may be highly dependent on the manufacturing process.

\begin{figure}
	\begin{center}
			\scalebox{0.9}{
			\includegraphics{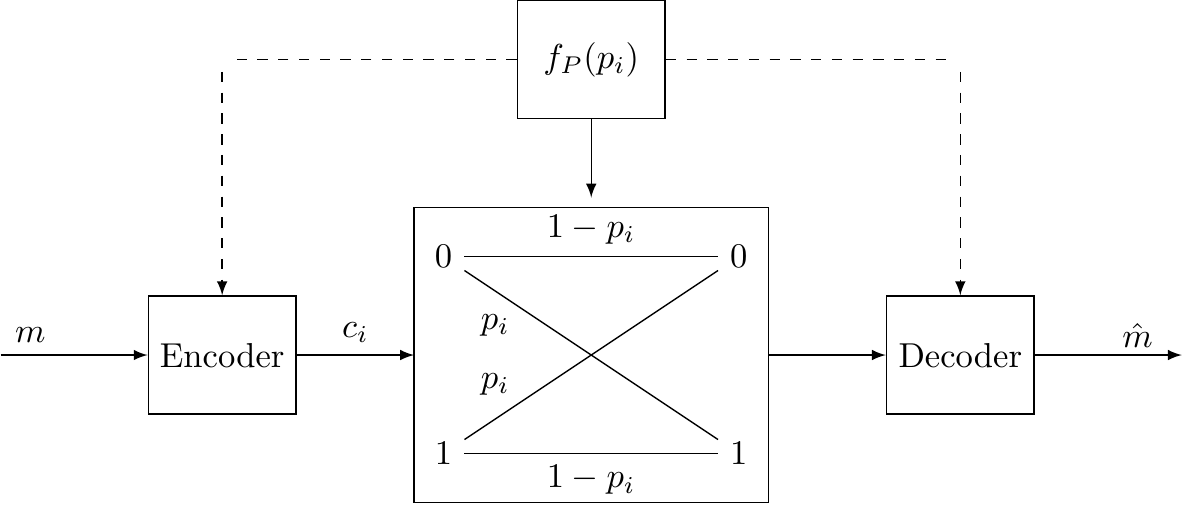}}
	\end{center}
	\caption{Varying Binary Symmetric Channel (VBSC) for the $i$-th channel use without knowledge of $p_i$ at encoder and decoder}
	\vspace{-1em}
	\label{fig:VBSC}
\end{figure}

For a block transmission with block length $n$, we denote the crossover probability for the $i$-th channel use by $p_i$. The $p_i$ can be given to the encoder and the decoder, potentially resulting in larger capacities for the varying binary symmetric channel (VBSC). There are several cases to be distinguished. The $p_i$ can be given to the encoder, the decoder, to both or to none of them. This is reflected in Fig.~\ref{fig:VBSC} by using dashed lines for the transmission of the $p_i$ to encoder and decoder. Furthermore, we distinguish causal and non-causal channel state information at the encoder.

\begin{definition}\label{def:causal_CSI}
	The channel state information (CSI) at the encoder is \emph{causal} if the encoder only has knowledge about $p^i \defeq p_1,\ldots,p_i$ to determine the $i$-th codeword symbol $c_i$, i.e., $c_i \defeq c_i(m,p^i)$.
\end{definition}
\begin{definition}\label{def:non-causal_CSI}
	The channel state information (CSI) at the encoder is \emph{non-causal} if the encoder has full knowledge about $p^n \defeq p_1,\ldots,p_n$ during the entire encoding process, i.e., $c_i \defeq c_i(m,p^n)$ for all $i$.
\end{definition}

Notice that it is not necessary to distinguish causal and non-causal CSI at the decoder. For the causal case the decoder can simply wait until the entire block $y^n$ and the complete CSI $p^n$ has been received before it starts the decoding procedure. Therefore, causal CSI is equivalent to non-causal CSI at the decoder.

On the contrary, causal and non-causal CSI at the encoder need to be distinguished in general.

\section{Information Theoretic Analysis of the channel}
\label{sec:VBSC_analysis}
In this section we aim to obtain the capacity for the VBSC. As introduced in Section~\ref{sec:channel_model} we distinguish several cases according to the availability of CSI at encoder and decoder. We present the capacity results for all cases except for CSI only non-causally at the encoder. The proofs within this section are frequently only shown for a continuous random variable $P$ describing the channel state. Practically, it may be more convenient to quantize the random variable $P$ since its concrete realization or its pdf (depending whether CSI is made available) need to be estimated. The proofs for a discrete random variable $P$ describing the channel state follow analogously.

We state the following lemma which is standard textbook knowledge (e.g. \cite{cover1999elements}) as a reminder for the reader.
\begin{lemma}[BSC Capacity, \cite{cover1999elements}]
	The capacity of the binary symmetric channel (BSC) with crossover probability $p$ is given by
	\begin{equation*}
	C_{BSC}(p) = 1-H_2(p)
	\end{equation*}
\end{lemma}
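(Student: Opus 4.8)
The plan is to prove the BSC capacity formula from first principles, since the statement is that $C_{\mathsf{BSC}}(p) = 1 - H_2(p)$. Recall that capacity is defined as $C = \max_{P_X} I(X;Y)$, the maximization of mutual information over all input distributions $P_X$. First I would write the mutual information in the form $I(X;Y) = H(Y) - H(Y \mid X)$, which is the decomposition that exploits the symmetry of the channel. The key observation is that the conditional entropy $H(Y \mid X)$ is independent of the input distribution: given $X = x$, the output $Y$ equals $x$ with probability $1-p$ and $1-x$ with probability $p$, so $H(Y \mid X = x) = H_2(p)$ for every $x \in \{0,1\}$, and therefore $H(Y \mid X) = H_2(p)$ regardless of $P_X$.

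Next I would bound $H(Y)$ from above. Since $Y$ is a binary random variable, $H(Y) \le 1$ with equality if and only if $Y$ is uniformly distributed on $\{0,1\}$. Combining these two facts gives
\begin{equation*}
I(X;Y) = H(Y) - H_2(p) \le 1 - H_2(p),
\end{equation*}
which establishes the upper bound on capacity. The only remaining step is to verify that this bound is achievable, i.e., that there exists an input distribution making $Y$ uniform.

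The final step is to exhibit the optimal input distribution. I would take $X$ to be uniform on $\{0,1\}$, i.e., $P_X(0) = P_X(1) = \tfrac{1}{2}$, and check that the induced output distribution is uniform: by symmetry of the channel, $P_Y(0) = \tfrac{1}{2}(1-p) + \tfrac{1}{2}p = \tfrac{1}{2}$, and likewise $P_Y(1) = \tfrac{1}{2}$, so $H(Y) = 1$ and the upper bound is met with equality. Hence $C_{\mathsf{BSC}}(p) = 1 - H_2(p)$.

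I do not expect any genuine obstacle here, as this is standard textbook material (indeed the lemma is attributed to \cite{cover1999elements}); the only point requiring a little care is justifying that $H(Y \mid X) = H_2(p)$ holds uniformly over the input alphabet, which is precisely what makes the channel symmetric and allows the clean separation of the input-independent term from the term $H(Y)$ that is optimized. The argument relies on the concavity of entropy only implicitly through the bound $H(Y) \le 1$ for a binary variable, so no appeal to a maximization over a continuous parameter is actually needed.
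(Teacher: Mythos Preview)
Your proof is correct and is exactly the standard textbook argument. Note, however, that the paper does not actually provide its own proof of this lemma: it is stated ``as a reminder for the reader'' with a citation to \cite{cover1999elements}, so there is nothing to compare against beyond observing that what you wrote is precisely the proof one finds in that reference.
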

\subsection{No Channel State Information at Encoder and Decoder}\label{subsec:no_csi}
Drawing the $p_i$ randomly for each channel use can be interpreted as part of the channel's noise. We denote the input sequence of the channel by $X_1,\ldots,X_n$ and the corresponding outputs by $Y_1,\ldots,Y_n$. Since the VBSC is a discrete memoryless channel (DMC), it holds that
\begin{equation*}
	C = \max_{P_X} I(X;Y) \enspace .
\end{equation*}
\begin{theorem}\label{th:noCSI_encoder_decoder}
  Let $f_P(p)$ denote the pdf of the random variable $P$ from which the $p_i$ are drawn for a VBSC. Let $f_P(p)$ be known at encoder and decoder and let the realizations of the $p_i$ be unknown at encoder and decoder. Then, the capacity of the VBSC is

	\begin{equation*}
		C_{VBSC} = C_{BSC}\left(\Exp\big[P\big]\right) \enspace .
	\end{equation*}
\end{theorem}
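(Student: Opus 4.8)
The plan is to show that, for an encoder and decoder that do not observe the realizations $p_i$, the VBSC is indistinguishable from an ordinary BSC whose crossover probability equals the mean $\Exp[P]$; the claim then follows at once from the BSC Capacity Lemma.

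First I would analyze a single channel use and form the marginal transition law obtained by averaging the conditional BSC over the random state $P$. Writing the output as $Y = X \oplus Z$ with additive noise bit $Z$, conditioning on $P = p$ gives $\Pr[Z = 1 \mid P = p] = p$ regardless of the input, so integrating against $f_P(p)$ yields
\begin{equation*}
\Pr[Y \neq X \mid X = 0] = \Pr[Y \neq X \mid X = 1] = \int_0^1 p\, f_P(p)\, \mathrm{d}p = \Exp[P] \enspace .
\end{equation*}
Because the two conditional flip probabilities coincide and equal $\bar{p} \defeq \Exp[P]$, the averaged single-use channel is itself a BSC with crossover probability $\bar{p}$.

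Next I would confirm that averaging does not destroy the memoryless structure already used to write $C = \max_{P_X} I(X;Y)$ before the theorem. Since the $p_i$ are drawn independently across channel uses, the effective noise bits $Z_i = Y_i \oplus X_i$ are i.i.d.\ $\mathrm{Bernoulli}(\bar{p})$ and independent of the transmitted block, so the $n$-use marginal channel factorizes into a product of identical $\mathrm{BSC}(\bar{p})$'s. The capacity formula therefore evaluates the mutual information of a single $\mathrm{BSC}(\bar{p})$, and invoking the BSC Capacity Lemma gives $C_{VBSC} = 1 - H_2(\bar{p}) = C_{BSC}(\Exp[P])$, with the maximizing input distribution being uniform by the symmetry of the channel.

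The main point requiring care is the symmetry step: I expect the only genuine obstacle to be verifying that it is the \emph{averaged channel}, and not the averaged capacity, that governs the no-CSI setting. Because $H_2$ is concave, one has $H_2(\Exp[P]) \geq \Exp[H_2(P)]$, so the naive guess $1 - \Exp[H_2(P)]$ would strictly overestimate the capacity; the displayed computation pins down the correct effective crossover probability through the linearity of the expectation, after which the standard symmetric-channel argument closes the proof.
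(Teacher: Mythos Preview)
Your proposal is correct and follows essentially the same route as the paper: both average the BSC transition law over $f_P$ to conclude that the marginal channel is a $\mathrm{BSC}(\Exp[P])$, and then invoke the BSC Capacity Lemma with the uniform input as maximizer. Your additive-noise framing $Y=X\oplus Z$ and the explicit check that the $Z_i$ remain i.i.d.\ are pleasant cosmetic touches, but the underlying argument is identical to the paper's direct computation of $P_{Y|X}(0|0)=1-\Exp[P]$.
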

\begin{proof}
	The statement can be proved by averaging over the pdf of $P$.
	It holds that
	\begin{align*}
		P_Y(0) &= \int_{\supp(f_P)} P_X(0) (1-p) f_P(p) \, dp + \int_{\supp(f_P)} (1-P_X(0)) p f_P(p) \, dp \nonumber\\
		&= P_X(0) + (1-2P_X(0)) \Exp\big[P\big]
	\end{align*}
	and
	\begin{align*}
		P_{Y|X}(0|0) &= \int_{\supp(f_P)} P_{Y|X,P} (0|0,p) f_P(p) \, dp = \int_{\supp(f_P)} (1-p) f_P(p) \, dp \nonumber\\
		&= 1-\Exp\big[P\big] \enspace .
	\end{align*}
	Furthermore, it holds that $P_{Y|X}(1|1)=P_{Y|X}(0|0)$ and hence $H(Y|X) = H_2\left(\Exp \big[ P \big]\right)$.
	Since the output alphabet is binary, we obtain
	\begin{equation*}
		I(X;Y) = H(Y) - H(Y|X) = H_2(P_Y(0)) - H_2(P_{Y|X}(0|0)) \enspace.
	\end{equation*}
	Notice that $H(Y|X)$ does not depend on the input distribution $P_X$. Therefore, we can determine the channel capacity by maximizing $H(Y)$ using the input distribution instead of maximizing $I(X;Y)$ for a generic DMC.
	Observe that choosing the input distribution to be uniform leads to a uniformly distributed output which maximizes the entropy $H(Y)$ and therefore achieves capacity.
	Thus,
	\begin{equation*}
		C_{VBSC} = 1-H_2\left(\Exp\big[P\big]\right) = C_{BSC}\left(\Exp\big[P\big]\right) \enspace .
	\end{equation*}
	\qed
\end{proof}

\subsection{Channel State Information at Encoder and Decoder}\label{subsec:state_encoder_decoder}
\begin{theorem}[\hspace{1pt}{\cite[Chapter 7.4]{el2011network}}]\label{th:elgamal_CSI_encoder_decoder}
  Let a discrete memoryless channel (DMC) with state space $\mathcal{P} \defeq \{1,\ldots,|\mathcal{P}|\}$ be given and let the state for each channel use be sampled i.i.d. from a probability mass function (pmf)
  $P_P(p)$. The capacity of this channel for CSI at encoder and decoder is given by
	\begin{equation*}
		C_{VBSC-ED} = \max_{P_{X|P}} I(X;Y|P) \enspace .
	\end{equation*}
\end{theorem}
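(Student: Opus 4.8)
The statement is a standard single-letter characterization, so the plan is to establish it in the usual two halves: a converse showing $R \le \max_{P_{X|P}} I(X;Y|P)$ for every achievable rate, and an achievability argument exhibiting codes whose rates approach this bound. Throughout I would exploit the two structural facts that make this the \emph{easy} CSI scenario: the state sequence $P^n$ is generated i.i.d.\ and independently of the message $M$, and it is available in full at the decoder, so it may be treated as part of the channel output.

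For the converse I would start from Fano's inequality, which for a reliable code gives $H(M \mid Y^n, P^n) \le n\epsilon_n$ with $\epsilon_n \to 0$. Writing $nR = H(M) = I(M; Y^n, P^n) + H(M\mid Y^n,P^n)$ and using $I(M;P^n)=0$ (the message is independent of the state), the problem reduces to bounding $I(M;Y^n\mid P^n)$. The plan is then to apply the chain rule, $I(M;Y^n\mid P^n)=\sum_{i=1}^n [H(Y_i\mid Y^{i-1},P^n) - H(Y_i\mid Y^{i-1},M,P^n)]$, to drop conditioning in the first term via $H(Y_i\mid Y^{i-1},P^n)\le H(Y_i\mid P_i)$, and in the second term to use that $X_i$ is a deterministic function of $(M,P^n)$ together with the memoryless property to collapse it to $H(Y_i\mid X_i,P_i)$. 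Each summand then becomes $I(X_i;Y_i\mid P_i)$, which is upper bounded by $\max_{P_{X|P}} I(X;Y|P)$, yielding $R \le \max_{P_{X|P}} I(X;Y|P)$ after letting $n\to\infty$.

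For achievability I would use a state-multiplexing (time-sharing over states) argument. Fix an optimal conditional input law $P^{*}_{X|P}$ and, for each state value $s \in \mathcal{P}$, regard the channel uses for which $P_i = s$ as a separate DMC. Since $P^n$ is known at both ends, encoder and decoder agree on this partition; by the weak law of large numbers the number of indices with $P_i=s$ concentrates around $nP_P(s)$, so on the $s$-th sub-block one can run a capacity-achieving code for that sub-channel with input distribution $P^{*}_{X|P}(\cdot\mid s)$ at rate close to $I(X;Y\mid P=s)$. Summing the per-state contributions gives an overall rate approaching $\sum_{s} P_P(s)\, I(X;Y\mid P=s) = I(X;Y\mid P)$ evaluated at $P^{*}_{X|P}$, which is the claimed capacity.

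The main obstacle I anticipate is not any single step but the bookkeeping in the achievability half: one must show that the random fluctuations in the per-state counts, together with the gap between $I(X;Y\mid P=s)$ and the finite-length performance on each sub-block, can be absorbed into a vanishing rate loss uniformly over the typical state sequences. Formalizing this cleanly---rather than the converse, which is a routine Fano-plus-chain-rule computation---is where the care is required, and it is precisely the part handled in full by the cited reference \cite{el2011network}.
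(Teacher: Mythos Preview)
Your proposal is correct and mirrors the treatment in the paper: the paper does not give its own proof of this theorem but cites \cite{el2011network}, sketching exactly the rate-splitting/time-sharing achievability you describe (partition the message into $|\mathcal{P}|$ submessages and transmit each on the channel uses with the matching state) and deferring the Fano-based converse to the reference. Your write-up simply fills in the standard details that the paper leaves to the citation.
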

This theorem is proved in the specified reference for a finite channel state space. The reason why it is not straightforward to generalize this result to infinite channel state spaces is that the achievability proof of the theorem uses rate splitting where the sequence $p_1,\ldots,p_n$ functions as a time sharing sequence. That means that the message $m$ is subdivided into $|\mathcal{P}|$ submessages $m_1,\ldots,m_{|\mathcal{P}|}$, which are then transmitted according to the CSI. In other words, for the transmission of $m_k$ the channel uses with state $k$ are utilized. The capacities conditioned on the respective channel states therefore determine the capacity of the combined channel with the respective distribution for the states $P_P(p)$ including CSI at encoder and decoder.
The converse result can be found in \cite{el2011network}.

Using Theorem~\ref{th:elgamal_CSI_encoder_decoder} we determine the capacity of the VBSC for finite channel state space $\mathcal{P}$.

\begin{theorem}\label{th:capacity_encoder_decoder}
	For the capacity of the VBSC with CSI at encoder and decoder and finite channel state space $\mathcal{P}$ it holds that
	\begin{equation*}
		C_{VBSC-ED} = \Exp \big [ C_{BSC}(P) \big ] \enspace .
	\end{equation*}
\end{theorem}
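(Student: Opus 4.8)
The plan is to take Theorem~\ref{th:elgamal_CSI_encoder_decoder} as the starting point, since it already collapses the capacity to the single-letter optimization
\[
C_{VBSC-ED} = \max_{P_{X|P}} I(X;Y|P),
\]
and then to evaluate this maximum in closed form. First I would expand the conditional mutual information over the finite state space by conditioning on each state value,
\[
I(X;Y|P) = \sum_{p \in \Pset} P_P(p)\, I(X;Y \mid P = p),
\]
and observe that, conditioned on $P = p$, the channel is exactly a BSC with crossover probability $p$, whose input is governed by the conditional law $P_{X|P=p}$.

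The crucial step is to argue that the maximization over $P_{X|P}$ decouples across states. Specifying a conditional input distribution $P_{X|P}$ is equivalent to choosing, independently for every state $p$, an input distribution $P_{X|P=p}$; there is no constraint linking the choices for different states, precisely because both encoder and decoder observe the state. Since each summand depends only on its own $P_{X|P=p}$ and the weights $P_P(p)$ are nonnegative, the maximum of the sum equals the sum of the maxima,
\[
\max_{P_{X|P}} \sum_{p \in \Pset} P_P(p)\, I(X;Y \mid P = p) = \sum_{p \in \Pset} P_P(p)\, \max_{P_{X|P=p}} I(X;Y \mid P = p).
\]

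To finish, I would identify each inner maximum with the BSC capacity: by the BSC Capacity Lemma, $\max_{P_{X|P=p}} I(X;Y \mid P = p) = C_{BSC}(p) = 1 - H_2(p)$, attained by the uniform input distribution in that state. Summing against the state pmf then yields
\[
C_{VBSC-ED} = \sum_{p \in \Pset} P_P(p)\, C_{BSC}(p) = \Exp\big[C_{BSC}(P)\big],
\]
as claimed. I expect the only real subtlety to be the justification of the interchange of maximum and sum in the middle step; this is where one must invoke that CSI at both encoder and decoder permits a state-by-state choice of input distribution with no coupling across states. Everything else reduces to the per-state BSC computation already recorded in the lemma, so the remaining manipulations are routine.
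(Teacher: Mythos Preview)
Your proposal is correct and follows essentially the same route as the paper: both start from Theorem~\ref{th:elgamal_CSI_encoder_decoder} and show that the uniform input in every state attains the maximum, yielding $\Exp[C_{BSC}(P)]$. The only cosmetic difference is that the paper writes $I(X;Y|P)=H(Y|P)-H(Y|X,P)$, observes the second term is input-independent, and then maximizes $H(Y|P)$ directly, whereas you first expand $I(X;Y|P)$ as $\sum_{p} P_P(p)\,I(X;Y\mid P=p)$, decouple the maximum across states, and invoke the BSC Capacity Lemma per state; the two arguments are equivalent.
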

\begin{proof}
	Using Theorem~\ref{th:elgamal_CSI_encoder_decoder} we have that
	\begin{equation*}
		C = \max_{P_{X|P}} I(X;Y|P) = \max_{P_{X|P}} (H(Y|P) - H(Y|X,P)) \enspace .
	\end{equation*}
	By observing that
	\begin{equation*}
		P_{Y|X,P}(0|0,p) = P_{Y|X,P}(1|1,p) = 1-p
	\end{equation*}
	we recognize that $H(Y|X,P=p) = H_2(p)$ is independent of $P_{X|P}$. Thus, it holds that the maximization of the mutual information boils down to a maximization of a single entropy.
	It holds that
	\begin{equation*}
		P_{Y|P}(0|p) = P_{X|P}(0|p) (1-p) + (1-P_{X|P}(0|p)) p \enspace .
	\end{equation*}
	For $P_{X|P}(0|p)=1/2$ we obtain that $P_{Y|P}(0|p)=1/2$ irrespective of $p$. Therefore, the uniform distribution is capacity-achieving for the VBSC with CSI at encoder and decoder and the capacity of the channel is
	\begin{equation*}
		C_{VBSC-ED} = \max_{P_{X|P}} I(X;Y|P) = 1-\Exp\big [H_2(P) \big] = \Exp \big [ C_{BSC}\big ]
	\end{equation*}
	\qed
\end{proof}
Now we show that for many relevant continuous pdfs $f_P(p)$ describing the channel state we can approach the rate
\begin{equation*}
\tilde{R} \defeq \max_{P_{X|P}} I(X;Y|P) = 1-\int_{\supp(f_P)} H_2(p)f_P(p) dp = \Exp \big [ C_{BSC}(P) \big]
\end{equation*}
arbitrarily close.
\begin{corollary}\label{cor:encoder_decoder}
	Let $f_P(p)$ denote the pdf of a random variable $P$ from which the crossover probability of the VBSC is drawn. Furthermore, let $f_P(p)$ be differentiable with a continuous derivative in $(0,1)$. If CSI is known at encoder and decoder the capacity of the channel is arbitrarily close to $\tilde{R}$.
\end{corollary}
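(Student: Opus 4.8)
The plan is to obtain $\tilde{R}$ as the limit of the finite-state capacities furnished by Theorem~\ref{th:capacity_encoder_decoder}: I would quantize the continuous state $P$ into finitely many cells, hand only the cell index to encoder and decoder, and let the quantization become arbitrarily fine. Two observations bracket the capacity $C$ of the continuous-state channel. For the upper bound, note that $H(Y\mid X,P)=\Exp\big[H_2(P)\big]$ for every input law while $H(Y\mid P)\le 1$ because the output alphabet is binary; hence $I(X;Y\mid P)\le 1-\Exp\big[H_2(P)\big]=\tilde{R}$ and therefore $C\le\tilde{R}$. For the lower bound, revealing only the index of a quantization cell is coarser than revealing the exact realization $p_i$, so any rate achievable under cell-CSI is also achievable under full CSI; this is the entry point for Theorem~\ref{th:capacity_encoder_decoder}.

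Concretely, I would fix a partition of $(0,1)$ into $N$ cells $B_1,\dots,B_N$ and use the fact that a mixture of binary symmetric channels driven by a fixed input symbol is again a BSC whose crossover equals the averaged flip probability. Thus, conditioned on $P\in B_k$, the channel seen by a pair that knows only the cell index is exactly a BSC with crossover $\bar p_k\defeq\Exp[P\mid P\in B_k]$. Theorem~\ref{th:capacity_encoder_decoder}, applied with the finite state set $\{B_1,\dots,B_N\}$, then yields the achievable rate
\begin{equation*}
R_N=1-\sum_{k=1}^{N}\Pr[P\in B_k]\,H_2(\bar p_k).
\end{equation*}
Since $H_2$ is concave, Jensen's inequality gives $H_2(\bar p_k)\ge\Exp[H_2(P)\mid P\in B_k]$, so $R_N\le\tilde{R}$ in agreement with the converse, and the whole problem reduces to driving the Jensen gap $\tilde{R}-R_N=\sum_{k}\Pr[P\in B_k]\big(H_2(\bar p_k)-\Exp[H_2(P)\mid P\in B_k]\big)$ to zero.

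To bound this gap I would split the support into an interior window $[\delta,1-\delta]$ and two boundary strips. On the interior $H_2$ is smooth with $|H_2''|\le 1/(\delta(1-\delta)\ln 2)$, so a second-order Taylor estimate bounds each cell's contribution by that constant times $\mathrm{Var}(P\mid B_k)\le(\mathrm{diam}\,B_k)^2$; refining the interior cells therefore sends the interior part of the gap to zero. On the boundary strips one cannot invoke $H_2''$, which blows up as $p\to 0,1$; instead I would use only that $H_2\le H_2(\delta)$ on $[0,\delta]$ and symmetrically near $1$, so that both $H_2(\bar p_k)$ and $\Exp[H_2(P)\mid B_k]$ lie in $[0,H_2(\delta)]$ and the boundary part of the gap is at most $H_2(\delta)$, which tends to $0$ as $\delta\to 0$. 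Choosing $\delta$ small first and then refining the interior makes $\tilde{R}-R_N$ smaller than any prescribed $\varepsilon$, whence $C=\tilde{R}$. The stated $C^1$ hypothesis on $f_P$ comfortably supplies the regularity these steps need — in fact $\bar p_k\to p$ holds for free because $\bar p_k\in B_k\ni p$, so mere continuity would already suffice via dominated convergence — and the genuine obstacle is the boundary behaviour, where the derivatives of $H_2$ are unbounded and one must rely on the vanishing of $H_2$ itself rather than on any smoothness.
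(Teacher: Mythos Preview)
Your argument is correct and follows the paper's overall strategy—discretize the continuous state into finitely many cells, invoke the finite-state Theorem~\ref{th:capacity_encoder_decoder}, and treat the interior of $(0,1)$ separately from the boundary—but the technical execution is different and in fact cleaner. The paper assigns to each cell the \emph{worst-case} crossover in that cell and controls the approximation error as a Riemann lower sum of $C_{BSC}(p)f_P(p)$, bounding it on the interior window via $\max\big|\tfrac{d}{dp}(f_P(p)C_{BSC}(p))\big|$; this is exactly where the $C^1$ hypothesis on $f_P$ is used. You instead observe that the cell-conditioned channel is itself a BSC with crossover $\bar p_k=\Exp[P\mid P\in B_k]$, so the shortfall $\tilde R-R_N$ is a pure Jensen gap for $H_2$, which you control through $|H_2''|$ on the interior and through $H_2(\delta)\to 0$ near the endpoints. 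That bound is a property of $H_2$ alone and needs no regularity of $f_P$ whatsoever, so your proof is strictly more general (as you already note), and you also supply the converse $C\le\tilde R$ that the paper's proof leaves implicit. The only cosmetic point to tidy is to choose the partition so that $\delta$ and $1-\delta$ are cell endpoints, ensuring no cell straddles the interior/boundary split.
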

\begin{proof}
	We fix some $\varepsilon>0$ and show that the rate $\tilde{R}-\varepsilon$ is achievable.
	To do so first fix $p^*$ such that
	\begin{equation*}
	\int_0^{p^*} f_P(p) dp < \frac{\varepsilon}{3} \enspace .
	\end{equation*}
	Furthermore, we fix $p^{**}$ such that
	\begin{equation*}
		\int_{p^{**}}^1 f_P(p) dp < \frac{\varepsilon}{3} \enspace .
	\end{equation*}
	Next we compute the Riemann-lower sum for the interval $[p^*,p^{**}]$ with $n$ equidistant subintervals $S^{(n)}_{p^*,p^{**}}(C_{BSC}(p)f_P(p))$ of the function $C_{BSC}(p)f_P(p)$. We can bound the difference between the integral and the Riemann-lower sum by
	\begin{equation*}
	\left | \int_{p^*}^{p^{**}} C_{BSC}(p) f_P(p) dp - S_{[p^*,p^{**}]}^{(n)}(C_{BSC}(p)f_P(p)) \right | \leq \frac{A (p^{**}-p^*)^2}{n}
	\end{equation*}
	where $A \defeq \max_{[p^*,p^{**}]} \left|\frac{d}{dp}(f_P(p)C_{BSC}(p))\right |$.
	
	The maximization is well defined because the interval $[p^*,p^{**}]$ is bounded and closed and the derivative of $C_{BSC}(p)f_P(p)$ is continuous. By choosing $n$ large enough we can upper bound the absolute value of the difference of Riemann-lower sum and the integral by $\varepsilon/3$. Since the capacity of the BSC is upper bounded by $1$ and lower bounded by $0$, the difference between Riemann-lower sum and integral within the intervals $[0,p^*]$ and the interval $[p^{**},1]$ are also bounded by $\varepsilon/3$. Similar arguments can be made for the Riemann-upper sum to get an upper bound and therefore we approach $\tilde{R}$ arbitrarily close.
	\qed
\end{proof}

Theorem 1 in \cite{MTV09a} states that publication of the CSI does on average not leak any information about the expected response of the SRAM cell. Therefore, the requirement of having CSI at the decoder does not compromise the security of the SRAM-PUF and the channel model with CSI at the encoder and decoder is relevant in practice.

\begin{figure}[h]
	\centering
	\scalebox{0.8}{
		\includegraphics{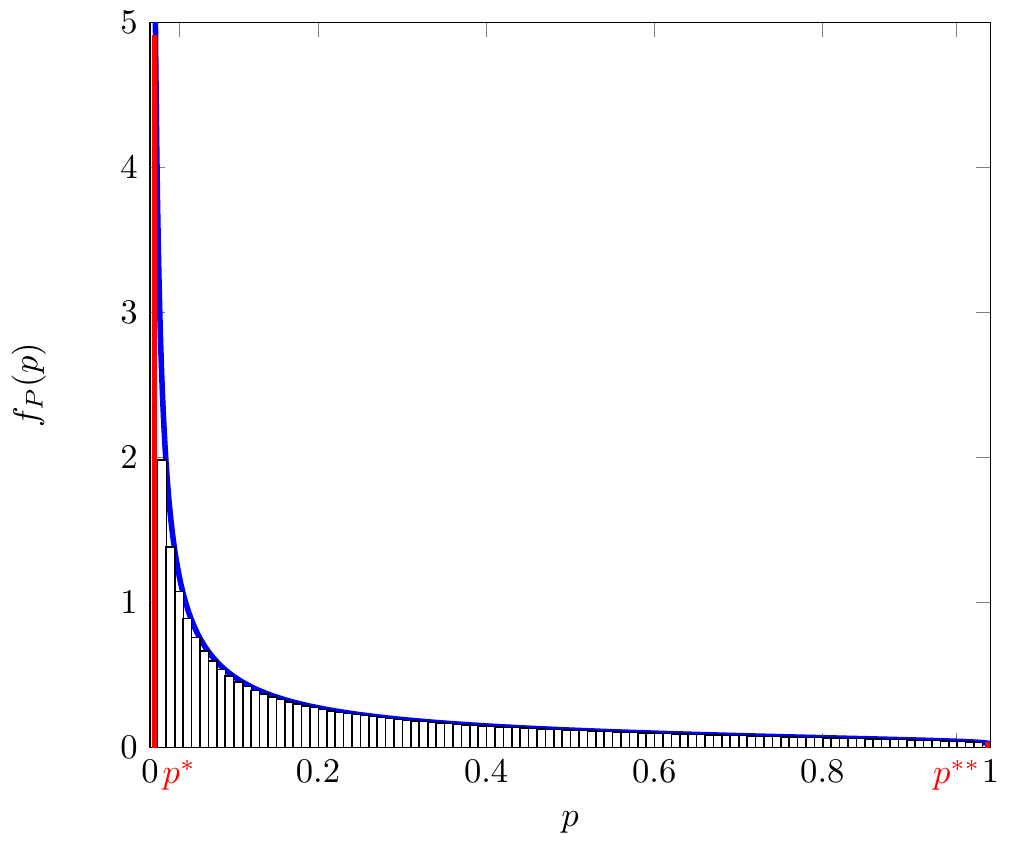}}
	\vspace{-1em}
	\caption{The probability density function (pdf) of the error for the SRAM cell. The pdf is computed as shown in~\cite{Mae13} with parameters $\lambda_1 = 0.1213$ and $\lambda_2 = 0.021$.}
	\label{fig:pdf}
	\vspace{-3em}
\end{figure}

\subsubsection{Achievability of Capacity using Polar Codes}

The proof of Corollary~\ref{cor:encoder_decoder} also shows that by constructing a finite number of polar codes, it is possible to achieve the capacity of the VBSC with CSI at encoder and decoder. Basically for each of the intervals in Fig.~\ref{fig:pdf}, it is possible to construct a polar code for a BSC with crossover probability equal to the value of the minimal $C_{BSC}(p)$ (largest $p$ if the interval is in $[0,1/2]$ and smallest $p$ if the interval is in $[1/2,1]$). Polar codes are capacity achieving for BSCs \cite{arikan2009channel}. Since every interval occurs with strictly positive probability, on average each of them occurs linearly in the blocklength $n$. Using the respective polar code for each interval achieves the capacity of the channel arbitrarily close.

\subsection{Channel State Information only at the Decoder}
As already mentioned in Section~\ref{sec:channel_model}, it makes no difference whether channel state information is given to the decoder in a causal or non-causal manner.
\begin{theorem}\label{th:csi_decoder}
	The capacity of the VBSC with channel state information at the decoder is equal to
	\begin{equation*}
	C_{VBSC-D} = \Exp \left[C_{BSC} (P) \right]
	\end{equation*}
\end{theorem}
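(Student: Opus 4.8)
The plan is to prove the claim by showing that the capacity with state information only at the decoder coincides with the capacity with state information at both encoder and decoder, which was established in Theorem~\ref{th:capacity_encoder_decoder}. For the converse, any code designed for the weaker scenario (state only at the decoder) is trivially a valid code when the encoder is additionally given the state, because the encoder may simply ignore its side information; providing the realizations $p_i$ to the encoder can therefore never decrease the achievable rate. This gives $C_{VBSC-D} \leq C_{VBSC-ED} = \Exp\big[C_{BSC}(P)\big]$ at once, so the whole difficulty lies in the achievability direction.

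For achievability I would first recast the channel with state known only at the decoder as an ordinary memoryless channel whose output is the pair $(Y,P)$, i.e.\ the decoder observes both the channel output and the realization of the state. By the standard coding theorem this channel has capacity $\max_{P_X} I(X;Y,P)$, where the maximization now runs only over input distributions $P_X$ that do \emph{not} depend on $P$, reflecting that the encoder is blind to the state.

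The next step is to simplify this mutual information via the chain rule,
\begin{equation*}
I(X;Y,P) = I(X;P) + I(X;Y|P) \enspace .
\end{equation*}
Since the codeword symbols are generated without knowledge of the state and the $p_i$ are drawn independently of the transmitted message, the input $X$ and the state $P$ are independent, whence $I(X;P) = 0$ and the expression reduces to $I(X;Y|P)$. The crucial observation is then that in the proof of Theorem~\ref{th:capacity_encoder_decoder} the value $\max_{P_{X|P}} I(X;Y|P) = \Exp\big[C_{BSC}(P)\big]$ is attained by the input with $P_{X|P}(0|p)=1/2$, which in fact does not depend on $p$ at all. Hence the uniform $P_X$ is a feasible choice even under the more restrictive maximization over state-independent inputs, and it already achieves the upper bound $\Exp\big[C_{BSC}(P)\big]$. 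Combining this lower bound with the converse yields $C_{VBSC-D} = \Exp\big[C_{BSC}(P)\big]$, and in particular shows that withholding the state from the encoder costs nothing.

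The main obstacle I anticipate is the rigorous justification of the capacity formula $\max_{P_X} I(X;Y,P)$ for the state-at-decoder channel, in particular the extension from a finite state space, where appending $P$ to the output yields a genuine DMC, to a continuous random variable $P$, along the same lines as indicated for the earlier results in this section. Once the formula and the independence of $X$ and $P$ are in place, the remaining steps are the routine entropy computations already carried out in the proof of Theorem~\ref{th:capacity_encoder_decoder}.
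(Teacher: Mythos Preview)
Your proposal is correct and follows essentially the same route as the paper: both arguments absorb the state into the channel output, invoke the DMC capacity formula $\max_{P_X} I(X;Y,P)$, and then observe (via the computations of Theorem~\ref{th:capacity_encoder_decoder}) that the uniform, state-independent input already attains $\Exp[C_{BSC}(P)]$. Your separate converse via $C_{VBSC-D}\le C_{VBSC-ED}$ is a harmless redundancy, since the formula $C=\max_{P_X} I(X;Y,P)$ already delivers both directions; and your concern about continuous $P$ is exactly the point the paper flags, noting that here no rate-splitting is needed so the finite-alphabet argument carries over unchanged.
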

\begin{proof}
	We can interpret the channel state information as an additional output of the channel \cite{el2011network}, i.e. we define the output by $Y' = (Y,P)$. This channel is still a DMC and hence its capacity is defined by
	\begin{equation*}
	C = \max_{P_X} I(X;Y') = \max_{P_X} I(X;Y,P) \enspace .
	\end{equation*}
	The remaining parts of the proof are similar to the proof of Theorem~\ref{th:capacity_encoder_decoder} with the exception that rate splitting is not necessary and therefore the proof of the statement when the channel state $P$ is drawn from a continuous random variable is the same as for the finite alphabet case.
	\qed
\end{proof}
Notice that CSI at the encoder (causally or non-causally) has no effect on the capacity of the VBSC as long as CSI is given to the decoder. This effect occurs because the input distribution has no effect on $H(Y|X,P)$ and the uniform distribution maximizes $H(Y|P)$ independent of $P$. Therefore, CSI is of no use in this case when it comes to maximizing the mutual information $I(X;Y|P)$ and consequently capacity is not increased by additional CSI at the encoder.

\subsection{Causal Channel State Information at the Encoder}\label{subsec:csi_causal_enc}
As discussed in Subsection~\ref{subsec:state_encoder_decoder}, publishing CSI does not compromise the security of the PUF. However, CSI only at the encoder may still be relevant because including the CSI into the helper data of the PUF increases the channel capacity at the expense of also increasing the helper data size.
In this subsection, we consider causal channel state information at the encoder.

In this subsection we will show the capacity of the VBSC with causal CSI at the encoder. Furthermore, we will show that the capacity is higher (except for some special cases) than for the case without CSI at encoder and decoder. This statement is non-trivial due to the symmetry of the channel.

\begin{theorem}\label{th:VBSC_cap_encoder}
	Let $f_P(p)$ denote the pdf from which the crossover probability is sampled i.i.d. for each channel use. Then the capacity of the VBSC with causal CSI is equal to
	\begin{equation}
	C_{VBSC-E} = 1-H_2 \left(\int_0^{0.5} f_P(p) (1-p) \, dp + \int_{0.5}^1 f_P(p)p \, dp \right) \enspace .
	\end{equation}
\end{theorem}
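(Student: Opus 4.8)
The plan is to invoke the coding theorem for a discrete memoryless channel with state known \emph{causally} at the encoder (Shannon strategies; see \cite[Chapter 7.5]{el2011network}). Writing $S=P$ for the channel state, the capacity equals $\max_{P_U} I(U;Y)$, where the auxiliary variable $U$ ranges over distributions on the set of \emph{strategies}, i.e.\ functions $t\colon\mathcal{P}\to\{0,1\}$, the channel input is $X=t(P)$ with $U=t$ drawn independently of $P$, and $Y$ is the corresponding BSC output. I would carry out the argument for a finite state space $\mathcal{P}$ and pass to the continuous case afterwards by the same quantization argument used in Corollary~\ref{cor:encoder_decoder}. For brevity set
\[
q^\star \defeq \Exp\big[\max(P,1-P)\big] = \int_0^{0.5} f_P(p)(1-p)\,dp + \int_{0.5}^1 f_P(p)\,p\,dp ,
\]
so the claim is exactly $C_{VBSC-E}=1-H_2(q^\star)$, and note that $q^\star\ge 1/2$.

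For the converse I would compute, for an arbitrary strategy $t$, the output bias
\[
a_t \defeq P_{Y|U}(0|t) = \int_{\{p:\,t(p)=0\}} (1-p) f_P(p)\,dp + \int_{\{p:\,t(p)=1\}} p\, f_P(p)\,dp = \int_{\supp(f_P)} g_t(p) f_P(p)\,dp ,
\]
where $g_t(p)\in\{p,\,1-p\}$ is chosen pointwise. Since $\min(p,1-p)\le g_t(p)\le\max(p,1-p)$, integrating against $f_P$ gives $a_t\in[\,1-q^\star,\,q^\star\,]$ for every $t$. Because $H_2$ is symmetric about $1/2$, increasing on $[0,1/2]$ and decreasing on $[1/2,1]$, its minimum over the interval $[1-q^\star,q^\star]$ is attained at the endpoints, both equal to $H_2(q^\star)$. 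Hence $H(Y\mid U=t)=H_2(a_t)\ge H_2(q^\star)$, so $H(Y\mid U)\ge H_2(q^\star)$, and combining with $H(Y)\le 1$ (binary output) yields $I(U;Y)=H(Y)-H(Y\mid U)\le 1-H_2(q^\star)$ for every $P_U$.

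For achievability I would exhibit a matching choice. Let $t_0$ be the strategy with $t_0(p)=0$ for $p<1/2$ and $t_0(p)=1$ for $p>1/2$, let $t_1\defeq 1-t_0$, and take $U$ uniform on $\{t_0,t_1\}$. A direct computation gives $a_{t_0}=q^\star$ and $a_{t_1}=1-q^\star$, so $U\to Y$ is a BSC with crossover probability $1-q^\star$; its marginal output is uniform, $P_Y(0)=\tfrac12 q^\star+\tfrac12(1-q^\star)=\tfrac12$, whence $H(Y)=1$ and $H(Y\mid U)=\tfrac12 H_2(q^\star)+\tfrac12 H_2(1-q^\star)=H_2(q^\star)$. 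Thus $I(U;Y)=1-H_2(q^\star)$, matching the converse. In fact achievability can be realized operationally without the general strategy machinery: fix one of the two strategies and use any capacity-achieving code for the effective BSC (e.g.\ polar codes \cite{arikan2009channel}), with the encoder computing $t_b(p_i)$ from the causally available $p_i$, which respects causality.

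The main obstacle I anticipate is not the optimization---once the Shannon-strategy formula is in place the reduction to a single effective BSC is short---but justifying that formula for a \emph{continuous} state space, since the cited theorem is stated for finite $\mathcal{P}$. I would resolve this exactly as in Corollary~\ref{cor:encoder_decoder}: quantize $P$, apply the finite-state result, and refine the partition, using continuity of $H_2$ and of the integrals defining $q^\star$ to pass to the limit. A secondary point worth stressing is \emph{why} the statement is nontrivial: for a symmetric channel the input distribution cannot influence $H(Y\mid X,P)$, so one might expect causal CSI at the encoder to be useless; the gain appears precisely because the strategy $U$, rather than $X$ itself, plays the role of the channel input, converting the varying channel into a single BSC whose crossover probability $1-q^\star=\Exp[\min(P,1-P)]$ is no larger than, and typically strictly below, $\Exp[P]$.
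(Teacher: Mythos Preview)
Your argument is correct and follows the same overall route as the paper: both invoke the Shannon--strategy capacity formula for causal CSI at the encoder (the paper's Theorem~\ref{th:CSI_encoder_ElGamal}), both identify the optimal mapper as ``flip the input iff $p>1/2$'', and both take $U$ uniform to make $Y$ uniform. The one genuine technical difference is in the converse. The paper first appeals to the cardinality bound $|\mathcal{U}|\le|\mathcal{Y}|=2$ and then argues (Lemma~\ref{lem:cond_entropy}) that among mappers $x(u,p)$ the threshold rule minimizes $H(Y\mid U)$; you instead work directly over \emph{all} strategies $t\colon\mathcal{P}\to\{0,1\}$ and obtain the clean pointwise sandwich $a_t\in[1-q^\star,q^\star]$, from which $H_2(a_t)\ge H_2(q^\star)$ follows by the symmetry and unimodality of $H_2$. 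Your version is more self-contained on the converse side---indeed, the paper's proof of Lemma~\ref{lem:cond_entropy} only \emph{computes} $H(Y\mid U)$ for the proposed mapper and does not explicitly verify minimality, whereas your bound on $a_t$ does this transparently. Conversely, the cardinality reduction the paper uses is what makes the search finite without any limiting argument; your approach compensates by noting that only the two extremal strategies $t_0,t_1$ are needed for achievability and handles the continuous state space via the quantization device of Corollary~\ref{cor:encoder_decoder}.
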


In order to prove this theorem, we make use of a generic result on the capacity of channels with causal CSI at the encoder which is stated in \cite{el2011network}.
\begin{theorem}\label{th:CSI_encoder_ElGamal}
	The capacity of a DMC with CSI causally available at the encoder can be computed by
	\begin{equation}
	C_{CSI-E} = \max_{P_U, x(U,P)} I(U;Y) \enspace ,
	\end{equation}
	where for the alphabet size $|\mathcal{U}|$ of the auxiliary variable $U$, it holds that $|\mathcal{U}| \leq \min\{(|\mathcal{X}|-1)|\mathcal{P}|+1, |\mathcal{Y}|\}$.
\end{theorem}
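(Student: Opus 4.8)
The plan is to establish this as the classical single-letter characterization for a discrete memoryless channel with causal state information at the encoder, combining a Shannon-strategy achievability argument, a single-letterized converse, and a final cardinality reduction via the support lemma. Throughout I treat $\mathcal{P}$ as finite and write $P_{Y|X,P}(y\mid x,p)$ for the channel law with i.i.d. state law $P_P(p)$. The encoder forms $c_i=x_i(m,p^i)$, the decoder observes only $y^n$, and the target is to show both that every rate below $\max_{P_U,\,x(U,P)}I(U;Y)$ is achievable and that no larger rate is possible, with the stated bound on $|\mathcal{U}|$.

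For achievability I would convert the state-dependent channel into an ordinary DMC by introducing \emph{strategies}: a strategy is a function $t\colon\mathcal{P}\to\mathcal{X}$, of which there are $|\mathcal{X}|^{|\mathcal{P}|}$. Observing the current state $p_i$ causally, the encoder transmits $x_i=t_i(p_i)$, where $t_1,\dots,t_n$ is the chosen strategy sequence. Averaging out the state, the channel from a strategy $u$ (identified with the auxiliary $U$, with $x(u,p)=u(p)$) to the output is the state-free DMC
\begin{equation*}
P_{Y\mid U}(y\mid u)=\sum_{p\in\mathcal{P}}P_P(p)\,P_{Y\mid X,P}\bigl(y\mid x(u,p),p\bigr).
\end{equation*}
Applying the ordinary channel coding theorem to this derived DMC shows that every rate below $\max_{P_U}I(U;Y)$ is achievable, and since the decoder only needs to recover $U$, and hence $m$, from $y^n$, no state information is required at the receiver.

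For the converse I would begin from Fano's inequality, $nR\le I(M;Y^n)+n\epsilon_n$, and expand $I(M;Y^n)=\sum_{i=1}^{n}I(M;Y_i\mid Y^{i-1})\le\sum_{i=1}^{n}I(U_i;Y_i)$ with $U_i\defeq(M,Y^{i-1})$. The decisive point, and where causality is used, is that $P_i$ is independent of $U_i$: since the states are i.i.d. and $Y^{i-1}$ is a function of $(M,P^{i-1})$ and the past noise, we get $P_i\perp(M,Y^{i-1})$, a relation that would fail under non-causal CSI. Introducing a uniform time-sharing variable $Q$ and setting $U=(Q,U_Q)$, $P=P_Q$, $Y=Y_Q$ yields $R\le I(U;Y)+\epsilon_n$ with $U\perp P$ and the memoryless Markov structure $U-(X,P)-Y$. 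The remaining subtlety is that $U_i$ does not by itself render $X_i$ a deterministic function of $(U_i,P_i)$, because the causal encoder's dependence on $P^{i-1}$ is buried inside $Y^{i-1}$. I would remove this by the functional-representation lemma: the stochastic map $(u,p)\mapsto x$ is a mixture of deterministic strategies, so folding the mixing index into an enlarged auxiliary $U'$ gives $X=x(U',P)$ deterministic with $U'\perp P$, while the Markov chain $U-U'-Y$ yields $I(U;Y)\le I(U';Y)$. Hence restricting to deterministic $x(U,P)$ is without loss, and $\max_{P_U,\,x(U,P)}I(U;Y)$ upper-bounds the rate.

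For the cardinality bound I would invoke the support lemma twice. To obtain $|\mathcal{U}|\le(|\mathcal{X}|-1)|\mathcal{P}|+1$, I preserve, for each of the $|\mathcal{P}|$ states, the induced state-conditional input distribution $P_{X\mid P}(\cdot\mid p)$; these fix the output distribution $P_Y$, and hence $H(Y)$, at a cost of $(|\mathcal{X}|-1)|\mathcal{P}|$ constraints, plus one further constraint to preserve $H(Y\mid U)$ and therefore $I(U;Y)$. To obtain $|\mathcal{U}|\le|\mathcal{Y}|$, I instead preserve $P_Y$ directly, which costs $|\mathcal{Y}|-1$ constraints, together with $H(Y\mid U)$, one more. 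Taking the smaller of the two resulting support sizes gives $|\mathcal{U}|\le\min\{(|\mathcal{X}|-1)|\mathcal{P}|+1,\,|\mathcal{Y}|\}$, completing the proof. I expect the converse to be the main obstacle, precisely because reconciling the natural auxiliary identification $U_i=(M,Y^{i-1})$ with a \emph{deterministic} encoding function $x(U,P)$ requires the functional-representation step rather than a direct substitution.
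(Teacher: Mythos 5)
The paper offers no proof of this statement at all: it is imported wholesale from El Gamal and Kim \cite{el2011network} (Shannon's classical causal-CSI result), and the only accompanying text is the interpretation of $U$ as the output of a state-oblivious encoder feeding a mapper $x(U,P)$ --- precisely the Shannon-strategy picture your achievability step makes rigorous. Your reconstruction is correct and is essentially the standard textbook proof: the derived DMC over the $|\mathcal{X}|^{|\mathcal{P}|}$ strategy functions, the Fano-based converse exploiting $P_i \perp (M,Y^{i-1})$ (which is indeed exactly where causality enters and where the non-causal case breaks), and the two support-lemma reductions yielding $(|\mathcal{X}|-1)|\mathcal{P}|+1$ and $|\mathcal{Y}|$ are all sound. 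One simplification worth noting: the functional-representation detour is avoidable. Take $U_i \defeq (M,P^{i-1},Y^{i-1})$ instead of $(M,Y^{i-1})$; then $U_i$ is still independent of $P_i$ (the states are i.i.d.\ and the past channel noise is independent of $P_i$), the chain of bounds $I(M;Y_i\mid Y^{i-1}) \le I(M,Y^{i-1};Y_i) \le I(U_i;Y_i)$ survives since enlarging the auxiliary only increases the mutual information, and now $X_i = x_i(M,P^i)$ is a genuinely deterministic function of $(U_i,P_i)$, so no functional-representation lemma is needed. (The textbook's own fix is different again: allow stochastic kernels $p(x\mid u,p)$ in the bound and observe that for fixed $P_U$ the objective is convex in this kernel, so the maximum is attained at a deterministic extreme point.) In the cardinality step you should state explicitly that the support-lemma reduction keeps the mapper $x(\cdot,\cdot)$ fixed and retains the product structure $P_U P_P$, so both determinism of the mapper and the independence $U \perp P$ are preserved; with that, your argument is complete. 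For the paper's application the second bound is the operative one: $|\mathcal{Y}|=2$ gives $|\mathcal{U}|\le 2$, as used immediately after the theorem.
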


In the theorem above, the auxiliary variable $U$ can be interpreted as the output of an encoder which encodes an arbitrary message $m$ independently of the channel state. $U$ functions as the input to a mapping device that maps the input according to the channel state onto the channel's input alphabet. This resulting $X$ is then transmitted over the channel resulting in the channel output $Y$. The transmission of the $i$-th symbol within the block is illustrated in Fig.~\ref{fig:VBSC_causal}.
\begin{figure}[t]
	\begin{center}
		\resizebox{1.1\textwidth}{!}{
			\includegraphics{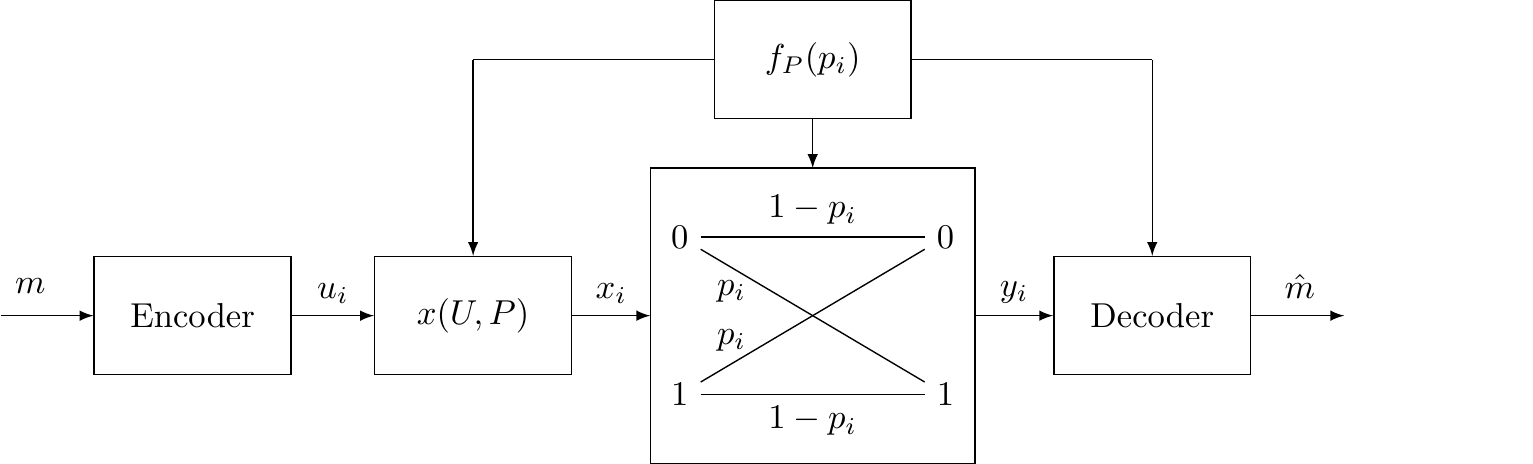}}
	\end{center}
	\caption{Analysis of the VBSC with causal CSI at the encoder according to Theorem~\ref{th:CSI_encoder_ElGamal}}
	\vspace{-1em}
	\label{fig:VBSC_causal}
\end{figure}
According to Theorem~\ref{th:CSI_encoder_ElGamal} for the VBSC we have that $|\mathcal{U}| \leq 2$ because the output alphabet of the channel is binary.

The proof of Theorem~\ref{th:VBSC_cap_encoder} is based on the following two lemmas.
\begin{lemma}\label{lem:cond_entropy}
	The mapper
	\begin{equation}\label{eq:mapper}
		x(u,p) = \begin{cases}
		u \quad \text{for } p \leq 0.5\\
		\overline{u} \quad \text{else}
		\end{cases}
	\end{equation}
	minimizes the conditional entropy $H(Y|U)$. Furthermore, $H(Y|U)$ is independent of $P_U$ and it holds
	\begin{equation}\label{eq:cond_entropy}
		H(Y|U) = H_2\left( \int_0^1 f_P(p) \max\{p,1-p\} \, dp\right) \enspace .
	\end{equation}
\end{lemma}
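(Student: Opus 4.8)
The plan is to exploit the causal-CSI structure of Theorem~\ref{th:CSI_encoder_ElGamal}, in which the auxiliary variable $U$ is chosen independently of the state $P$ and the channel input is $X=x(U,P)$. First I would fix an arbitrary mapper and compute the conditional output distribution. Since $U$ and $P$ are independent and, conditioned on $X=x$ and $P=p$, the BSC gives $P(Y=1\mid X=0,P=p)=p$ and $P(Y=1\mid X=1,P=p)=1-p$, it follows that
\[
P_{Y|U}(1\mid u) = \int_0^1 f_P(p)\, g_u(p)\, dp, \qquad g_u(p) = \begin{cases} p & \text{if } x(u,p)=0,\\ 1-p & \text{if } x(u,p)=1.\end{cases}
\]
Because the output is binary, $H(Y\mid U=u)=H_2\big(P_{Y|U}(1\mid u)\big)$, so that $H(Y\mid U)=\sum_u P_U(u)\,H_2\big(P_{Y|U}(1\mid u)\big)$, and the whole problem reduces to understanding which values $P_{Y|U}(1\mid u)$ the mapper can realize.

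The key observation is that for each $p$ the mapper may contribute either $p$ or $1-p$ to the integral, so $P_{Y|U}(1\mid u)$ lies in the interval $[\,q,\,1-q\,]$ with $q \defeq \int_0^1 f_P(p)\,\min\{p,1-p\}\,dp$, the lower endpoint being attained by always selecting $\min\{p,1-p\}$ and the upper endpoint by always selecting $\max\{p,1-p\}$. Since $\min\{p,1-p\}\le \tfrac12$, we have $q\le\tfrac12\le 1-q$, so this interval is symmetric about $\tfrac12$. As $H_2$ is symmetric about $\tfrac12$, increasing on $[0,\tfrac12]$ and decreasing on $[\tfrac12,1]$, its minimum over $[\,q,\,1-q\,]$ is attained at both endpoints and equals $H_2(q)$. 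Hence, for every mapper and every $P_U$, each term satisfies $H_2\big(P_{Y|U}(1\mid u)\big)\ge H_2(q)$, and therefore $H(Y\mid U)\ge H_2(q)$.

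It then remains to verify that the proposed mapper~\eqref{eq:mapper} attains this bound. For $u=0$ it selects $x(0,p)=0$ on $p\le 0.5$ and $x(0,p)=1$ on $p>0.5$, realizing $g_0(p)=\min\{p,1-p\}$ and hence $P_{Y|U}(1\mid 0)=q$; for $u=1$ it realizes $g_1(p)=\max\{p,1-p\}$ and hence $P_{Y|U}(1\mid 1)=1-q$. Both produce the value $H_2(q)$, so $H(Y\mid U)=P_U(0)H_2(q)+P_U(1)H_2(q)=H_2(q)$ for every $P_U$, which simultaneously establishes minimality and independence of $P_U$. Finally, using $H_2(q)=H_2(1-q)$ together with $1-q=\int_0^1 f_P(p)\,\max\{p,1-p\}\,dp$ rewrites the value in the form claimed in~\eqref{eq:cond_entropy}.

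I expect the only genuinely delicate point to be recognizing the symmetric-interval structure: one must notice that the reachable range of $P_{Y|U}(1\mid u)$ is centred at $\tfrac12$ and that $H_2$ takes the \emph{same} value at its two endpoints. This symmetry is exactly what forces the per-symbol minimum to coincide for both $u=0$ and $u=1$, which in turn is what makes $H(Y\mid U)$ collapse to a single constant $H_2(q)$ independently of the distribution $P_U$.
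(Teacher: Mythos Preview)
Your argument is correct and in fact more complete than the paper's own proof. The paper merely plugs in the mapper~\eqref{eq:mapper}, computes $P_{Y|U}(0\mid 0)=P_{Y|U}(1\mid 1)=\int_0^1 f_P(p)\max\{p,1-p\}\,dp$, and concludes~\eqref{eq:cond_entropy}; it never actually argues that no other mapper can yield a smaller $H(Y\mid U)$. Your observation that, for any mapper, $P_{Y|U}(1\mid u)$ must land in the symmetric interval $[q,1-q]$ with $q=\int_0^1 f_P(p)\min\{p,1-p\}\,dp$, together with the endpoint-minimum property of $H_2$ on such an interval, supplies exactly this missing optimality step. The verification part of your proof (evaluating the proposed mapper and showing independence of $P_U$) coincides with what the paper does.
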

\begin{proof}
	By using the specified mapping device we observe that
	\begin{equation}\label{eq:P_YU}
		P_{Y|U}(0|0) = P_{Y|U}(1|1) = \int_0^1 f_P(p) \max\{p,1-p\}\, dp \enspace .
	\end{equation}
	Hence, we have
	\begin{equation}
		H(Y|U) = H_2\left( \int_0^1 f_P(p) \max\{p,1-p\}\, dp\right)
	\end{equation}
	independent of $P_U$.
	\qed
\end{proof}
\begin{lemma}\label{lem:input_distribution}
	If the mapper is chosen as specified in Lemma~\ref{lem:cond_entropy}, choosing $P_U$ to be the uniform distribution leads to a uniform output distribution $P_Y$.
\end{lemma}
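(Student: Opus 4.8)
The plan is to show that the mapper of Lemma~\ref{lem:cond_entropy} turns the combination of state-dependent flipping and the underlying BSC into an \emph{effective} binary symmetric channel from the auxiliary variable $U$ to the output $Y$, and then to invoke the elementary fact that a symmetric binary channel maps a uniform input to a uniform output. First I would marginalize over the channel state $P$ to obtain the effective transition law $P_{Y|U}$. Because the mapper sets $X = u$ when $p \leq 0.5$ and $X = \overline{u}$ when $p > 0.5$, and the physical channel is a BSC with crossover $p$, the conditional $P_{Y|U}(0|0)$ splits at $p = 0.5$ into a contribution $1-p$ on $[0,0.5]$ and a contribution $p$ on $(0.5,1]$, reproducing exactly the quantity $\int_0^1 f_P(p)\max\{p,1-p\}\,dp$ already identified in Lemma~\ref{lem:cond_entropy}.

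The key structural step is that Lemma~\ref{lem:cond_entropy} already gives the symmetry $P_{Y|U}(0|0) = P_{Y|U}(1|1)$. Combining this with the row-normalization $P_{Y|U}(0|1) + P_{Y|U}(1|1) = 1$ of the binary transition matrix yields $P_{Y|U}(0|1) = 1 - P_{Y|U}(0|0)$, so the effective $U$-to-$Y$ channel is itself a BSC. The remaining step is the one-line computation
\[
P_Y(0) = P_U(0)\,P_{Y|U}(0|0) + P_U(1)\,P_{Y|U}(0|1).
\]
Substituting the uniform law $P_U(0) = P_U(1) = 1/2$ and the relation $P_{Y|U}(0|1) = 1 - P_{Y|U}(0|0)$ gives $P_Y(0) = \tfrac12\big(P_{Y|U}(0|0) + 1 - P_{Y|U}(0|0)\big) = \tfrac12$, and the complementary event yields $P_Y(1) = 1/2$, so $P_Y$ is uniform.

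I do not expect any genuine obstacle here; the lemma is a short corollary of the symmetry established in Lemma~\ref{lem:cond_entropy}. The only point worth stating cleanly is why the first column of the effective transition matrix sums to one, i.e. $P_{Y|U}(0|0) + P_{Y|U}(0|1) = 1$. Besides the symmetry argument above, this also follows directly at the level of the state-conditioned channel: for every fixed $p$ one has $P_{Y|U,P}(0|0,p) + P_{Y|U,P}(0|1,p) = (1-p) + p = 1$ on $[0,0.5]$ and $= p + (1-p) = 1$ on $(0.5,1]$, so the identity holds pointwise in $p$ and is preserved upon integrating against the normalized density $f_P$. I would therefore present the proof as a two-line consequence of Lemma~\ref{lem:cond_entropy}, optionally remarking on this pointwise complementarity to make the column-sum transparent.
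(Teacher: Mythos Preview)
Your proposal is correct and essentially mirrors the paper's own proof: both compute $P_Y(0)$ by marginalizing over $U$ and the state $P$ under the specified mapper, then set $P_U(0)=1/2$ to obtain $P_Y(0)=1/2$. The only cosmetic difference is that the paper writes out the four integral terms explicitly before substituting $P_U(0)=1/2$, whereas you first package the effective $U\to Y$ channel as a BSC via the symmetry $P_{Y|U}(0|0)=P_{Y|U}(1|1)$ from Lemma~\ref{lem:cond_entropy} and then invoke the uniform-to-uniform property.
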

\begin{proof}
	We observe that
	\begin{align}
		P_Y(0) &= P_U(0) \int_0^{0.5} f_P(p) (1-p) \, dp + P_U(0) \int_{0.5}^1 f_P(p) p \, dp \nonumber \\
		&\quad + (1-P_U(0)) \int_0^{0.5} f_P(p) p \, dp + (1-P_U(0)) \int_{0.5}^1 f_P(p) (1-p) \, dp \enspace .
	\end{align}
	By choosing $P_U(0) = 1/2$ we obtain $P_Y(0)=1/2$, achieving $H(Y)=1$.
	\qed
\end{proof}
Finally, we are ready to determine the capacity of the VBSC with causal CSI at the encoder.
\begin{proof}[Theorem~\ref{th:VBSC_cap_encoder}]
	Expanding the mutual information in Theorem~\ref{th:CSI_encoder_ElGamal} we have
	\begin{equation*}
		C_{VBSC-E} = \max_{P_U, x(U,P)} I(U;Y) = \max_{P_U, x(U,P)} H(Y) - H(Y|U) \enspace .
	\end{equation*}
	By Lemma~\ref{lem:cond_entropy} we have that the conditional entropy $H(Y|U)$ is minimized for the mapper specified in~\eqref{eq:mapper}. Furthermore, according to Lemma~\ref{lem:input_distribution}, a uniform $P_U$ leads to a uniform $P_Y$ and consequently $H(Y)=1$. Therefore, the choice we made for $P_U$ and $x(U,P)$ maximizes and $I(U;Y)$ and the channel capacity is obtained.
	Combining the previous results, we get
	\begin{equation*}
		C_{VBSC-E} = 1-H_2 \left(\int_0^{0.5} f_P(p) (1-p) \, dp + \int_{0.5}^1 f_P(p)p \, dp \right)
	\end{equation*}
	completing the proof.
	\qed
\end{proof}

\subsubsection{Comparison to the capacity without CSI:}
We show that $C_{VBSC} \leq C_{VBSC-E}$ with equality if and only if $f_P(p)=0, \; \forall p \in (1/2,1]$ almost everywhere.
The capacity of the VBSC without CSI at encoder and decoder has been specified in Theorem~\ref{th:noCSI_encoder_decoder} by
\begin{equation*}
C_{VBSC} = C_{BSC}(\Exp\big[ P \big ]) = 1-H_2(\Exp\big [ P \big ]) = 1-H_2\left( \int_0^1 f_P(p) p \, dp \right)\enspace .
\end{equation*}
Recall that $H_2(p)$ is concave and symmetric around its maximum which is achieved for $p=1/2$. Observe that
\begin{equation*}
\int_0^{0.5} f_P(p)(1-p) \, dp + \int_{0.5}^1 f_P(p) p \, dp \geq \frac{1}{2}
\end{equation*}
as $f_P(p)$ is a valid pdf and $1-p \geq 1/2$ on the interval $[0,0.5]$ and $p \geq 1/2$ on $[0.5,1]$. Without loss of generality assume that $\Exp[P]\leq 1/2$. (The case that $\Exp[P]>1/2$ follows similarly due to the symmetry of $H_2(p)$.) Due to the aforementioned symmetry of $H_2$ around $1/2$ it follows that
\begin{equation*}
H_2\left(\int_0^{0.5} f_P(p)(1-p)\, dp + \int_{0.5}^1 f_P(p) p \, dp\right) \leq H_2\left(\int_0^1 f_P(p) p \,  dp\right)
\end{equation*}
if
\begin{equation}\label{eq:comparison_encoder_CSI_no_CSI}
1-\left(\int_0^{0.5} f_P(p)(1-p) \, dp + \int_{0.5}^1 f_P(p) p \, dp\right) \leq \int_0^1 f_P(p) p \, dp \enspace .
\end{equation}
We show that inequality~\eqref{eq:comparison_encoder_CSI_no_CSI} holds and analyze for which cases equality is reached.
\eqref{eq:comparison_encoder_CSI_no_CSI} is equivalent to
\begin{align}
1-\left(1-\int_{0.5}^1 f_P(p)\, dp\right) + \int_0^{0.5} f_P(p) p \, dp - \int_{0.5}^1 f_P(p) p \, dp &\leq \int_0^1 f_P(p) p \, dp \nonumber \\
\int_{0.5}^1 f_P(p) \, dp &\leq 2 \int_{0.5}^1 f_P(p) p \, dp \nonumber \\
\int_{0.5}^1 f_P(p) (1-2p) \, dp &\leq 0 \enspace ,
\end{align}
where the last line holds because the integrand can never be positive and equality only holds if $f_P(p)=0, \; \forall p \in (1/2,1]$ almost everywhere.

\subsection{Non-causal Channel State Information at the Encoder}

\begin{proposition}
	For the capacity of the VBSC with non-causal CSI at the encoder denoted as $C_{VBSC-E/nc}$ it holds that
	\begin{equation*}
		C_{VBSC-E} \leq C_{VBSC-E/nc} \leq C_{VBSC-ED} = C_{VBSC-D}
	\end{equation*}
\end{proposition}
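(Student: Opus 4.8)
The plan is to establish the two inequalities by purely operational (genie/strategy-inclusion) arguments and to obtain the rightmost equality directly from results already proved. The equality $C_{VBSC-ED} = C_{VBSC-D}$ is immediate, since Theorem~\ref{th:capacity_encoder_decoder} gives $C_{VBSC-ED} = \Exp\big[C_{BSC}(P)\big]$ while Theorem~\ref{th:csi_decoder} gives $C_{VBSC-D} = \Exp\big[C_{BSC}(P)\big]$, so the two expressions coincide and no further work is required here.

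For the leftmost inequality $C_{VBSC-E} \leq C_{VBSC-E/nc}$, I would argue by inclusion of the admissible encoding functions. By Definition~\ref{def:causal_CSI} a causal encoder outputs $c_i = c_i(m,p^i)$, and this is a special case of the non-causal encoder $c_i = c_i(m,p^n)$ of Definition~\ref{def:non-causal_CSI}, namely one whose output happens not to depend on the future states $p_{i+1},\ldots,p_n$. Consequently every code that is admissible with causal CSI is also admissible with non-causal CSI, so any rate achievable in the causal setting is achievable in the non-causal setting and the inequality follows.

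For the middle inequality $C_{VBSC-E/nc} \leq C_{VBSC-ED}$, I would use a genie argument in which the decoder discards its extra information. Given any blocklength-$n$ code achieving rate $R$ with non-causal CSI at the encoder only, consisting of an encoder $x^n(m,p^n)$ and a decoder $\hat m(y^n)$, I reuse the very same encoder in the encoder-and-decoder scenario -- which is legitimate, since the encoder has access to $p^n$ in both settings -- and employ the decoder defined by $\hat m(y^n,p^n) \defeq \hat m(y^n)$, which simply ignores the CSI. This code has identical error probability and rate, so the achievable region for encoder-only CSI is contained in the one for encoder-and-decoder CSI, yielding the inequality.

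The main point requiring care is the formal definition of achievability in each of the three settings and the verification that the codes I construct are genuinely admissible there; once this bookkeeping is in place, every step reduces to the principle that enlarging the information available at a terminal can only enlarge its set of admissible strategies, and no nontrivial estimate is needed.
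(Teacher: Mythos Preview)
Your proposal is correct and follows essentially the same approach as the paper: the paper's proof simply states that both inequalities are trivial because additional information at a terminal can only help, which is precisely your genie/strategy-inclusion argument spelled out in more detail. The equality $C_{VBSC-ED}=C_{VBSC-D}$ is not argued separately in the paper's proof of this proposition, but as you note it is an immediate consequence of Theorems~\ref{th:capacity_encoder_decoder} and~\ref{th:csi_decoder}.
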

\begin{proof}
	Both bounds are trivial. It is obvious that non-causal CSI at the encoder can only increase capacity compared to the case with causal CSI and clearly CSI at encoder and decoder can just increase the capacity compared to the case for non-causal CSI only at the encoder.
	\qed
\end{proof}

\section{Results for a Fixed Crossover Probability Distribution}\label{sec:example}
In this section, we take the model for a specific crossover probability distribution introduced in \cite{Mae13} (see Fig.~\ref{fig:pdf}) and compute the capacities for the CSI models discussed in Subsections~\ref{subsec:no_csi} to \ref{subsec:csi_causal_enc}.

To obtain the desired results, we use Theorems~\ref{th:noCSI_encoder_decoder}, \ref{th:capacity_encoder_decoder}, \ref{th:csi_decoder} and \ref{th:VBSC_cap_encoder} as well as Corollary~\ref{cor:encoder_decoder}. The results are presented in Table~\ref{tab:results}. We observe that CSI at encoder and decoder increases the capacity of the VBSC for the proposed crossover probability distribution by $0.179$ bits per channel use (or by 25\%) compared to the case without CSI at encoder and decoder. Table~\ref{tab:results} furthermore shows that causal CSI at the encoder increases the channel capacity by $0.0688$ bits per channel use again compared to the case without CSI.

\begin{table}[t]
	\caption{Numerical computation of the capacities for the pdf $f_P(p)$ from \cite{Mae13} which is depicted in Fig.~\ref{fig:pdf}.}
	\centering
	\begin{tabular}{|P{2.2cm}|P{2.2cm}|P{2.2cm}|P{2.2cm}|}
		\hline
		$C_{VBSC}$ & $C_{VBSC-E}$ & $C_{VBSC-D}$ & $C_{VBSC-ED}$ \\  \hline
		$0.6961$ bpcu & $0.7649$ bpcu & $0.8751$ bpcu & $0.8751$ bpcu\\ \hline
	\end{tabular}
	\label{tab:results}
\end{table}

\newpage
\section{Conclusion}\label{sec:conclusion}
In this work we have introduced the Varying Binary Symmetric Channel (VBSC) to describe the difference of PUF responses between the key enrollment and the key reproduction phase. We have derived capacity results depending on the available channel state information (CSI) at encoder and decoder. To exemplify our results we computed the channel capacities for the crossover probability model proposed in \cite{Mae13}. The results show that the capacity of the corresponding VBSC increases by $25\%$ if encoder and decoder have CSI compared to the case when both have no knowledge about the channel state. Furthermore, we argued that polar codes can be used to achieve capacity if CSI is available at encoder and decoder.

\bibliographystyle{splncs04}
\bibliography{main}

\begin{thebibliography}{10}
\providecommand{\url}[1]{\texttt{#1}}
\providecommand{\urlprefix}{URL }
\providecommand{\doi}[1]{https://doi.org/#1}

\bibitem{arikan2009channel}
Arikan, E.: Channel polarization: A method for constructing capacity-achieving
  codes for symmetric binary-input memoryless channels. IEEE Transactions on
  information Theory  \textbf{55}(7),  3051--3073 (2009)

\bibitem{BIW+17}
Chen, B., Ignatenko, T., Willems, F.M.J., Maes, R., van~der Sluis, E., Selimis,
  G.: A robust {SRAM-PUF} key generation scheme based on polar codes. In: IEEE
  Global Communications Conference (GLOBECOM). pp.~1--6 (2017)

\bibitem{cover1999elements}
Cover, T.M.: Elements of information theory. John Wiley \& Sons (1999)

\bibitem{Dodis2004}
Dodis, Y., Reyzin, L., Smith, A.: Fuzzy extractors: How to generate strong keys
  from biometrics and other noisy data. In: International conference on the
  theory and applications of cryptographic techniques. pp. 523--540. Springer
  (2004)

\bibitem{el2011network}
El~Gamal, A., Kim, Y.H.: Network information theory. Cambridge university press
  (2011)

\bibitem{GKST07}
Guajardo, J., Kumar, S.S., Schrijen, G.J., Tuyls, P.: {FPGA} intrinsic {PUFs}
  and their use for {IP} protection. In: Paillier, P., Verbauwhede, I. (eds.)
  Workshop on Cryptographic Hardware and Embedded Systems (CHES). LNCS,
  vol.~4727, pp. 63--80. Springer Berlin / Heidelberg (2007)

\bibitem{HYKD14}
Herder, C., Yu, M.D.M., Koushanfar, F., Devadas, S.: Physical unclonable
  functions and applications: A tutorial. Proceedings of the IEEE
  \textbf{102}(8),  1126--1141 (2014)

\bibitem{HKS20}
Hiller, M., Kürzinger, L., Sigl, G.: Review of error correction for {PUFs} and
  evaluation on state-of-the-art {FPGAs}. Journal of Cryptographic Engineering
  (2020)

\bibitem{Mae12}
Maes, R.: Physically Unclonable Functions: Constructions, Properties and
  Applications. Dissertation (2012)

\bibitem{Mae13}
Maes, R.: An accurate probabilistic reliability model for silicon {PUFs}. In:
  Bertoni, G., Coron, J.S. (eds.) Workshop on Cryptographic Hardware and
  Embedded Systems (CHES). pp. 73--89. LNCS, Springer, Berlin / Heidelberg
  (2013)

\bibitem{MTV09a}
Maes, R., Tuyls, P., Verbauwhede, I.: A soft decision helper data algorithm for
  {SRAM PUFs}. In: IEEE International Symposium on Information Theory (ISIT).
  pp. 2101--2105 (2009)

\end{thebibliography}

\end{document}